\newcommand{\Exp}[1]{\mathbb{E}\!\left[#1\right]}
\newcommand{\tr}{\mathrm{tr}}
\theoremstyle{definition}
\newtheorem{theorem}{Theorem}
\newtheorem{lemma}{Lemma}
\newcommand{\e}{\mathrm{e}}
\newcommand{\EC}{\textsc{Exact Cover}}
\newcommand{\POS}{\textsc{Positive 1-in-$k$ SAT}}
\newcommand{\dstar}{d^\star_k}
\newcommand{\wmax}{w_{\max}}
\begin{document}

\title{The phase transition in random regular exact cover}
\author{Cristopher Moore \\ Santa Fe Institute \\ \texttt{moore@santafe.edu}}
\maketitle

\abstract{
A $k$-uniform, $d$-regular instance of \textsc{Exact Cover} is a family of $m$ sets $F_{n,d,k} = \{ S_j \subseteq \{1,\ldots,n\} \}$, where each subset has size $k$ and each $1 \le i \le n$ is contained in $d$ of the $S_j$.  It is satisfiable if there is a subset $T \subseteq \{1,\ldots,n\}$ such that $|T \cap S_j|=1$ for all $j$.  Alternately, we can consider it a $d$-regular instance of \textsc{Positive 1-in-$k$ SAT}, i.e., a Boolean formula with $m$ clauses and $n$ variables where each clause contains $k$ variables and demands that exactly one of them is true.  We determine the satisfiability threshold for random instances of this type with $k > 2$.  Letting 
\[
d^\star 
= \frac{\ln k}{(k-1)(- \ln (1-1/k))} + 1
\, ,
\] 
we show that $F_{n,d,k}$ is satisfiable with high probability if $d < d^\star$ and unsatisfiable with high probability if $d > d^\star$.  We do this with a simple application of the first and second moment methods, boosting the probability of satisfiability below $d^\star$ to $1-o(1)$ using the small subgraph conditioning method.
}

\section{Introduction}

A $k$-uniform $d$-regular instance of \EC, or equivalently a \POS\ formula, has $n$ variables and $m$ clauses where $dn=km$.  We can treat it as a bipartite multigraph, with $n$ variables of degree $d$ on one side connected to $m$ clauses of degree $k$ on the other.  A satisfying assignmnent is a subset $T$ of the variables such that exactly one variable in each clause is true.

We choose random formulas $F_{n,d,k}$ according to the configuration model: that is, we make $d$ copies of each variable and $k$ copies of each clause, and choose a uniformly random bipartite matching of the resulting $dn=km$ copies with each other.  We assume that $d,k = O(1)$ so that $m=\Theta(n)$.

We determine the satisfiability threshold for these formulas.  Namely, we prove the following.
\begin{theorem}
\label{thm:main}
Let
\begin{equation}
\label{eq:dstar}
\dstar 
= \frac{\ln k}{(k-1)(- \ln (1-1/k))} + 1
\, .
\end{equation}
Then for any $k > 3$ and any integer $d$, 
\[
\lim_{n \to \infty} \Pr[\mbox{$F_{n,d,k}$ is satisfiable}] 
= \begin{cases} 
0 & d > \dstar \\ 
1 & d < \dstar \, .
\end{cases}
\]
\end{theorem}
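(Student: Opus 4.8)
The plan is to prove the two directions separately: the $d>\dstar$ half by the first moment method, and the $d<\dstar$ half by the second moment method, and then to boost the probability of satisfiability from a positive constant to $1-o(1)$ using small subgraph conditioning. Let $X$ denote the number of satisfying assignments of $F_{n,d,k}$, i.e. the number of sets $T\subseteq\{1,\dots,n\}$ with $|T\cap S_j|=1$ for every clause $j$. A preliminary observation collapses the moment computations: by counting incidences between true variables and clauses, every satisfying assignment has exactly $|T|=m/d=n/k$ true variables, since each clause contains exactly one true variable and each true variable lies in $d$ clauses. (In the configuration model one must separately note that parallel edges — of which there are $O(1)$ with high probability — perturb these counts negligibly; conditioning on the underlying multigraph being simple, an event of probability $\Omega(1)$, removes this nuisance altogether.) Thus there is no need to range over the density of true variables.

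\emph{First moment.} Working in the configuration model, $\Exp{X}=\binom{n}{n/k}\,P$, where $P$ is the probability that a fixed $T$ with $|T|=n/k$ is satisfying: among the $mk=nd$ clause-copies, exactly one per clause must be matched to one of the $|T|d=m$ true variable-copies, which gives $P=k^m\,m!\,(m(k-1))!/(mk)!$. Stirling's approximation then yields $\tfrac1n\ln\Exp{X}\to\phi_k(d)$ for an explicit function $\phi_k$ that is affine and strictly decreasing in $d$; solving $\phi_k(d)=0$ recovers precisely the value $\dstar$ of~\eqref{eq:dstar}. Hence for $d>\dstar$ we have $\Exp{X}=e^{-\Omega(n)}$, and Markov's inequality gives $\Pr[\text{$F_{n,d,k}$ satisfiable}]=\Pr[X\ge1]\to0$. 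Since $d$ is an integer we may assume $d\ne\dstar$ (indeed $\dstar\notin\Z$ for $k\ge3$), so there is no delicate critical window to analyze.

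\emph{Second moment.} For the lower bound I would compute $\Exp{X^2}=\sum_{T,T'}\Pr[T\text{ and }T'\text{ both satisfying}]$, grouping pairs by the overlap $a=|T\cap T'|$ (both sets having size $n/k$). Given the overlap, the probability that both are satisfying is again a configuration-model matching count: for each clause, either a single copy is matched into $T\cap T'$ with the remaining copies matched outside $T\cup T'$, or one copy goes to $T\setminus T'$, one to $T'\setminus T$, and the rest outside $T\cup T'$. Summing over the two local patterns in each clause and applying Stirling reduces $\Exp{X^2}/(\Exp{X})^2$ to $\sum_a e^{n\,g_k(a/n)+o(n)}$ for an explicit rate function $g_k$ with $g_k(1/k^2)=0$, the point $\alpha=1/k^2$ being the ``uncorrelated'' overlap at which $T$ and $T'$ look independent. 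The crux is to show that $g_k(\alpha)<0$ for every admissible $\alpha\ne1/k^2$ and that $g_k''(1/k^2)<0$; Laplace's method then gives $\Exp{X^2}=\Theta\big((\Exp{X})^2\big)$, and the Paley--Zygmund inequality yields $\Pr[X\ge1]=\Omega(1)$ for $d<\dstar$ (note $\Exp{X}\to\infty$ there). I expect this overlap inequality — verifying that the entropy/energy trade-off admits no competing maximizer away from $1/k^2$ — to be the main technical obstacle, and I would guess it is exactly here that the hypothesis $k>3$ is used (for $k=3$ one typically needs a weighted variant of the second moment).

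\emph{Small subgraph conditioning.} The second moment only gives a constant lower bound because $X$ does not concentrate: $\Var{X}=\Theta\big((\Exp{X})^2\big)$, the fluctuations coming from the number of short cycles. In the random bipartite configuration model the number $C_i$ of cycles of length $2i$ converges jointly to independent Poisson variables with explicit means $\lambda_i=\lambda_i(d,k)$, so I would apply the small subgraph conditioning theorem (Robinson--Wormald; Janson): compute the joint factorial moments $\Exp{X\,(C_1)_{j_1}\cdots(C_\ell)_{j_\ell}}/\Exp{X}\to\prod_i\big(\lambda_i(1+\delta_i)\big)^{j_i}$ — where $\delta_i$ measures the mild bias that planting a $2i$-cycle induces on the number of satisfying assignments — and verify the summability of $\sum_i\lambda_i\delta_i^2$ together with the consistency relation $\Exp{X^2}/(\Exp{X})^2\to\exp\!\big(\sum_i\lambda_i\delta_i^2\big)$, which should fall out of the second-moment computation as exactly the sub-exponential correction discarded above. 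The theorem then gives $X/\Exp{X}\to W$ in distribution with $W>0$ almost surely, and in particular $\Pr[X\ge1]\to1$ whenever $d<\dstar$. Together with the first-moment bound this proves the theorem; the bulk of the work is the three explicit Stirling/Laplace estimates, with the second-moment overlap inequality the one genuinely delicate point.
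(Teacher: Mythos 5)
Your proposal follows exactly the paper's three-stage route (first moment for $d>\dstar$, second moment parameterized by overlap for $d<\dstar$, small subgraph conditioning on short-cycle counts to boost positive probability to $1-o(1)$), with your overlap variable $\alpha=|T\cap T'|/n$ being a linear reparameterization of the paper's $w$, and your uncorrelated point $\alpha=1/k^2$ corresponding to the paper's $w=1-1/k$. One speculative remark does not match the paper: the paper needs no reweighting even for $k=3$ --- its key Lemma~\ref{lem:wmax} verifies the overlap inequality directly for all $k>2$ by checking that $\phi_2$ has a single inflection point and hence a unique interior maximum --- so the hypothesis $k>3$ in Theorem~\ref{thm:main} is not where you guessed (and in fact the abstract and the supporting lemmas all state $k>2$).
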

\noindent
Note that when $k$ is large, $\dstar \approx (\ln k) + 1$.  Note also that $\dstar$ is never an integer, since then $k$ would be a rational power of $k-1$.

An easy application of the first and second moment method gives unsatisfability w.h.p. for $d > \dstar$, and satisfiability with positive probability for $d < \dstar$.  We boost the latter to high probability with the small subgraph conditioning method~\cite{robinson-wormald,wormald}.

The fact that the second moment method is exact suggests that, at least in the $d$-regular case, this problem does not have a condensation transition.  In contrast, for \textsc{Graph Coloring}, NAE-$k$-SAT and $k$-SAT, at a certain density condensation occurs~\cite{krzakala-pnas,coja-lenka,coja-coloring}: the set of satisfying assignments becomes dominated by a single cluster, and the number of satisfying assignments becomes much less concentrated.  Thus while the second moment method gives fairly good bounds for these problems~\cite{ach-moore,ach-peres,ach-naor,ach-moore-reg}, pushing it beyond this point requires much more sophisticated methods that count clusters of solutions, and further reduce the variance by carefully conditioning on the distribution of neighborhood structures throughout the formula~\cite{coja-nae,coja-ksat,ding-nae}.  This line of work recently culminated in a proof of the threshold conjecture for $k$-SAT for sufficiently large $k$~\cite{ding-ksat}, although many open questions still remain.  

Here the situation is much simpler.  The only source of variance in the number of satisfying assignments is the number of cycles of each length in the formula, so the small subgraph conditioning method reduces the variance enough to prove satisfiability with high probability.  It also turns out that that the point corresponding to two independent satisfying assignments is a local maximum of the rate function for the second moment, so there is no need to reweight the assignments as in~\cite{ach-peres,ach-naor-peres,dani-moore}.  

The second moment method also owes its success to the fact that, in the $d$-regular case, \POS\ is ``locked''  in the sense that most variables cannot be flipped without also flipping many others, so that satisfying assignments are isolated~\cite{locked1,locked2}.  Given a set $S \in \{1,\ldots,k-1\}$ let $S$-SAT be the problem where each clause has $k$ variables, and demands that the number of true variables it contains is an element of $S$.  If $S$ does not contain any adjacent pairs $i, i+1$, and if every variable has degree at least $2$, these problems are locked.  In~\cite{locked2} the authors wrote the first and second moments for this family of problems, described the resulting bound as a fixed point equation, and conjectured that it is exact.  This paper proves the case of their conjecture where $S=\{1\}$.

One can also consider random \POS\ formulas where clauses appear independently, so that the degrees of the variables are Poisson distributed.  A lower bound on the threshold in this model was given in~\cite{vamsi} for $k=3$ using differential equations.  Other constraint satisfaction problems for which the threshold can be computed exactly (and where condensation does not appear to occur) include random XOR-SAT~\cite{dubois-mandler,mrtz,pittel-sorkin} as well as 1-in-$k$ SAT~\cite{acim} where literals are negated with probability $1/2$ as opposed to the positive case we consider here.

We write $f(n) \sim g(n)$ if $\lim_{n \to \infty} f(n)/g(n) = 1$.  We say a series of events $E_n$ holds with high probability if $\Pr[E_n] \sim 1$, and with positive probability if, for some constant $B > 0$, $\Pr[E_n] \ge B$ for all sufficiently large $n$.

\section{The first and second moments}
\label{sec:moments}

In this section we compute the first and second moments, and show that $\Exp{Z^2} / \Exp{Z}^2$ tends to a constant.  This is enough to show that $F_{n,d,k}$ is satisfiable with positive probability for $d < \star$, and we improve this to high probability in the following section.  

\begin{lemma}
\label{lem:upper}
If $d > \dstar$ then $F_{n,d,k}$ is unsatisfiable with high probability.
\end{lemma}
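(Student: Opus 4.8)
The plan is a routine first moment argument. Let $Z = Z_{n,d,k}$ be the number of satisfying assignments of $F_{n,d,k}$. Counting clause--true-variable incidences in two ways, $m = \sum_j |T \cap S_j| = \sum_{i \in T} d = d|T|$, so every satisfying assignment has exactly $|T| = m/d = n/k$ true variables; we may take $k \mid n$, since rounding affects nothing asymptotically. Hence by linearity of expectation
\[
\Exp{Z} = \binom{n}{n/k}\; \Pr\bigl[\text{a fixed set } T,\ |T| = n/k,\ \text{is satisfying}\bigr] .
\]
In the configuration model a matching of the $dn$ variable-copies to the $km$ clause-copies makes such a $T$ satisfying exactly when each clause has exactly one of its $k$ slots matched to a copy of a variable in $T$. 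Counting these matchings by choosing the ``true slot'' in each clause ($k^m$ ways), matching the resulting $m$ true slots to the $d\cdot(n/k)=m$ true variable-copies ($m!$ ways) and the remaining $(k-1)m$ slots to the $(k-1)m$ false variable-copies ($((k-1)m)!$ ways), and dividing by the $(km)!$ total matchings, I get
\[
\Exp{Z} = \binom{n}{n/k}\,\frac{k^{m}\, m!\,((k-1)m)!}{(km)!},
\qquad m = \frac{dn}{k}.
\]
(Repeated variables inside a clause and other multigraph artifacts occur with probability $O(1/n)$ per clause; accounting for them changes $\Exp{Z}$ by an $\e^{o(n)}$ factor, invisible to the estimate below --- the usual configuration-model bookkeeping.)

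Next I would take logarithms and apply Stirling's formula. Writing $H(x) = -x\ln x - (1-x)\ln(1-x)$ for the natural-log binary entropy, we have $\binom{n}{n/k} = \e^{nH(1/k)+o(n)}$ and $m!\,((k-1)m)!/(km)! = \binom{km}{m}^{-1} = \e^{-kmH(1/k)+o(n)}$, so with $m=dn/k$,
\[
\frac1n \ln \Exp{Z} \;\longrightarrow\; g(d) := H(1/k) + \frac{d}{k}\ln k - d\,H(1/k) .
\]
Since $H(1/k) = \frac{\ln k}{k} + \frac{k-1}{k}\bigl(-\ln(1-1/k)\bigr)$, the function $g$ is affine in $d$ with slope $g'(d) = \frac{\ln k}{k} - H(1/k) = -\frac{k-1}{k}\bigl(-\ln(1-1/k)\bigr) < 0$, hence strictly decreasing, and solving $g(d)=0$ gives
\[
d = \frac{H(1/k)}{\frac{k-1}{k}\bigl(-\ln(1-1/k)\bigr)} = \frac{\ln k}{(k-1)\bigl(-\ln(1-1/k)\bigr)} + 1 = \dstar ,
\]
exactly the value in \eqref{eq:dstar}. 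Therefore $g(d) < 0$ for every $d > \dstar$, so $\Exp{Z} \to 0$, and Markov's inequality gives $\Pr[Z \ge 1] \le \Exp{Z} \to 0$; that is, $F_{n,d,k}$ is unsatisfiable with high probability.

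I do not expect a real obstacle: this is a textbook first moment bound, and the single step that deserves attention is the algebraic check that the rate function $g$ vanishes precisely at $\dstar$ rather than at some nearby value. Two minor points round out a fully rigorous version: confirming that multigraph artifacts contribute only an $\e^{o(n)}$ factor (routine), and noting that because $g$ is affine and monotone the first moment already identifies the threshold from above with no slack --- so, in contrast to the harder constraint satisfaction problems mentioned in the introduction, the entire difficulty of Theorem~\ref{thm:main} lies on the other side, in the second moment and small subgraph conditioning arguments of the next section.
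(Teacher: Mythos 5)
Your proof is correct and follows essentially the same route as the paper: compute $\Exp{Z}$ exactly in the configuration model, apply Stirling to get an exponential rate, observe that the rate is affine and strictly decreasing in $d$ with a unique zero at $\dstar$, and finish with Markov's inequality. The paper's $\phi_1$ is literally your $g(d)$, so there is nothing further to add.
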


\begin{proof}
Let $Z$ be the number of satisfying assignments $T$.  Since $|T| = n/k$, the expectation of $Z$ is ${n \choose n/k}$ times the fraction of matchings, for a given $T$, that connect each clause to exactly one of the $dn/k=m$ copies of variables in $T$.  Applying Stirling's formula $x! = (1+o(1)) \sqrt{2 \pi x} \,x^x \e^{-x}$ gives
\begin{equation}
\label{eq:first}
\Exp{Z} 
= {n \choose n/k} \frac{m! \,k^m \,((k-1)m)!}{(km)!} 
= k^m {n \choose n/k} \left/ {km \choose m} \right.
\sim \sqrt{d} \,k^m \,\e^{(n-m) h(1/k)}  
= \sqrt{d} \,\e^{n \phi_1} \, . 
\end{equation}
where 
\begin{align}
\phi_1 
&= \frac{d}{k} \ln k - (d-1) \,h(1/k) \nonumber \\
&= \frac{\ln k + (d-1)(k-1) \ln (1-1/k)}{k} 
\, . 
\label{eq:phi1}
\end{align}
and 
\[
h(\alpha) = -\alpha \ln \alpha - (1-\alpha) \ln (1-\alpha)
\]
denotes the Shannon entropy function.   
Since $\phi_1=0$ when $d = \dstar$, and $\phi_1$ is a decreasing function of $d$, we have $\Pr[Z > 0] \le \Exp{Z} = \e^{-\Omega(n)}$ whenever $d > \dstar$.
\end{proof}

\newpage

\begin{lemma}
If $k > 2$ and $d < \dstar$ then $F_{n,d,k}$ is satisfiable with positive probability.
\end{lemma}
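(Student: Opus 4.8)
The plan is to show $\Exp{Z^2} = O(\Exp{Z}^2)$ and then invoke the Paley--Zygmund (second moment) inequality $\Pr[Z > 0] \ge \Exp{Z}^2 / \Exp{Z^2}$, which is thereby bounded below by a positive constant.

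First I would write the second moment as a sum over the overlap $\ell = |T_1 \cap T_2|$ of two candidate assignments of size $n/k$. By the symmetry of the configuration model the probability that $T_1$ and $T_2$ are \emph{both} satisfying depends only on $\ell$. Partition the variables into four blocks $A = T_1 \cap T_2$, $B = T_1 \setminus T_2$, $C = T_2 \setminus T_1$ and $D$ (the rest), of sizes $\ell$, $n/k-\ell$, $n/k-\ell$, $n-2n/k+\ell$. A given clause is jointly satisfied in exactly one of two ways: its unique true variable lies in $A$ and the remaining $k-1$ variables lie in $D$ (call it \emph{type I}), or it has one variable in $B$, one in $C$, and $k-2$ in $D$ (\emph{type II}). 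Since every copy of an $A$-variable must be matched into the $A$-slot of a distinct type-I clause, the number of type-I clauses is forced to equal $d\ell$. Counting matchings block by block then gives
\[
\Exp{Z^2} = \sum_{\ell=0}^{n/k} \binom{n}{\ell}\binom{n-\ell}{n/k-\ell}\binom{n-n/k}{n/k-\ell}\,\frac{N(\ell)}{(km)!}\, ,
\]
\[
N(\ell) = \binom{m}{d\ell}\,k^{d\ell}\,\bigl(k(k-1)\bigr)^{m-d\ell}\,(d\ell)!\,\bigl((m-d\ell)!\bigr)^{2}\,\bigl((k-2)m+d\ell\bigr)!\, .
\]
(As in the first moment, configurations with multi-edges are negligible, so I would compute in the configuration model and ignore them; this also absorbs the fact that the type-I count is $d\ell$ only up to such corrections.) Two consistency checks: the $\ell = n/k$ term is exactly the diagonal contribution $\Exp{Z}$, and summing the binomial factors over all $\ell$ returns $\binom{n}{n/k}^2$, the total number of ordered pairs of $(n/k)$-subsets.

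Next I would apply Stirling's formula to write the $\ell$-th summand as $\e^{n\psi(z)+o(n)}$ with $z=\ell/n\in[0,1/k]$, for an explicit rate function $\psi$. The ``independent point'' $z^\star = 1/k^2$ --- the typical overlap of two uniformly random $(n/k)$-subsets --- is always a stationary point of $\psi$, and one computes directly that $\psi(z^\star)=2\phi_1$, matching $\Exp{Z}^2 \sim d\,\e^{2n\phi_1}$ in the exponent. The crucial, and I expect hardest, step is to verify that $z^\star$ is a strict local maximum, i.e.\ $\psi''(z^\star)<0$, \emph{and} that it is the global maximum of $\psi$ on $[0,1/k]$. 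The local statement is a routine derivative computation; the global one must be established for every $k>2$ and $d<\dstar$, and this is exactly where the ``locked'' nature of regular \POS\ is used: unlike \textsc{Graph Coloring}, NAE-$k$-SAT or $k$-SAT, here the plain (unweighted) second moment is already tight, so no reweighting of assignments is needed and $z^\star$ dominates outright. Concretely I would study the sign of $\psi'$ on $(0,1/k)$ --- or establish concavity of $\psi$ after a suitable change of variables --- to rule out any competing interior maximum, and separately note that the endpoint $z=1/k$ contributes only $\Exp{Z}=\e^{n\phi_1}$, which is exponentially smaller than $\e^{2n\phi_1}$.

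With $\psi$ uniquely maximized at $z^\star$ and $\psi''(z^\star)<0$, Laplace's method --- summing a Gaussian of width $\Theta(n^{-1/2})$ over the $\Theta(n^{1/2})$ relevant integer values of $\ell$ --- gives $\Exp{Z^2}\sim C\,\Exp{Z}^2$ for an explicit constant $C>0$ determined by $\psi''(z^\star)$ together with the polynomial prefactors from Stirling and the $\sqrt d$ in~\eqref{eq:first}. Hence $\Pr[Z>0]\ge \Exp{Z}^2/\Exp{Z^2}\to 1/C>0$, which proves the lemma; the value of $C$ will be reused in the next section, where the small subgraph conditioning method upgrades ``positive probability'' to ``high probability''.
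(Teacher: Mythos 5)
Your approach is identical in substance to the paper's: parametrize the second moment by overlap, apply Stirling's formula, locate the maximizer of the resulting rate function, run Laplace's method, and invoke Paley--Zygmund. The block decomposition into $A,B,C,D$ and the matching count $N(\ell)$ are correct; the change of variables $w = 1-k\ell/n$ turns your summand termwise into the paper's~\eqref{eq:expz2}, and $z^\star = 1/k^2$ is exactly the paper's $\wmax = 1-1/k$. (Incidentally, the parenthetical worry is unnecessary: the number of type-I clauses is \emph{exactly} $d\ell$, since a satisfying $T_1$ places exactly one true copy in each clause, so no clause can receive two copies of the same $A$-variable.)

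The genuine gap is the step you yourself flag as hardest: showing that $z^\star$ is the \emph{global} maximizer of $\psi$ on $[0,1/k]$. You only gesture at how to do this, and one of your two suggestions is unlikely to work: $\phi_2$ is \emph{not} concave, even after obvious reparametrizations, since it has an interior inflection point at $w_0 = (d-2)(k-1)/(dk-d-k)$. The paper's Lemma~\ref{lem:wmax} turns this to advantage: because $\phi_2''$ vanishes at exactly one point, $\phi_2'$ is monotone on each side of $w_0$, so there is at most one interior local maximum, which must then be the stationary point $1-1/k$ (where one checks $\phi_2''(1-1/k) = -k(k-d)/(k-1)^2 < 0$). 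The endpoints are disposed of separately: $\phi_2(0) = \phi_1 < 2\phi_1 = \phi_2(1-1/k)$, which is where the hypothesis $d < \dstar$ enters via $\phi_1 > 0$, and $\phi_2 \to -\infty$ as $w \to 1$ from the $h(w)$ term. Your other suggestion, analyzing the sign of $\psi'$ directly, could in principle be made to work, but absent this unique-inflection-point observation or an equivalent monotonicity argument, the global-maximum claim is unproved and the lemma is not yet established.
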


\begin{proof}
The second moment is the expected number of pairs of assignments $T, T'$ that are both satisfying.  This depends on the size of their difference.  For a given $w \in [0,1]$, let $Z^{(2)}_w$ denote the expected number of satisfying pairs with $|T-T'| = |T'-T| = wn/k$.  For a given such pair, $(1-w)m$ of the clauses must be satisfied by a variable in $T \cap T'$, and the remaining $wm$ clauses must be satisfied both by a variable in $T-T'$ and one in $T'-T$.  The number of such matchings is
\begin{gather*}
{m \choose wm} ((1-w)m)! \,k^{(1-w)m} \,(wm)!^2 \,(k(k-1))^{wm} \,((k-(1-w)-2w)m)! \\
= k^m \,(k-1)^{wm} \,m! \,(wm)! \,((k-1-w)m)! \, .
\end{gather*}
Thus
\begin{align}
\Exp{Z^{(2)}_w}
&= k^m \,(k-1)^{wm} {n \choose n/k} {n/k \choose wn/k} {(1-1/k)n \choose wn/k} \frac{m! \,(wm)! \,((k-1-w)m)!}{(km)!} 
\nonumber \\
&= \Exp{Z} \,(k-1)^{wm} {n/k \choose wn/k} {(1-1/k)n \choose wn/k} \left/ {(k-1)m \choose wm} \right. 
\, .
\label{eq:expz2}
\end{align}
For $0 < w < 1$, applying Stirling's formula to~\eqref{eq:expz2} gives
\[
\Exp{Z^{(2)}_w}
\sim \frac{1}{\sqrt{2\pi n}} \,f(w) \,\e^{n \phi_2(w)} \, , 
\]
where 
\begin{equation}
\label{eq:f}
f(w) = d\sqrt{\frac{k}{w(1-w)}} 
\end{equation}
and
\begin{equation}
\label{eq:phi2}
\phi_2(w) 
= \phi_1 
+ \frac{wd}{k} \ln (k-1) 
+ \frac{1}{k} \,h(w) 
- (d-1) \left( 1-\frac{1}{k} \right) \,h\!\left( \frac{w}{k-1} \right) \, . 
\end{equation}

As in~\cite{ach-moore}, we can approximate the second moment by an integral, which we evaluate asymptotically using Laplace's method.  If $\phi_2(w)$ has a unique maximum $\wmax \in [0,1]$ where $0 < \wmax < 1$ and $\phi''_2(\wmax) < 0$, then 
\begin{align}
\Exp{Z^2} 
&= \sum_{w=0,k/n,2k/n,\ldots} \Exp{Z^{(2)}_w} \nonumber \\
&\sim \frac{1}{\sqrt{2\pi n}} \frac{n}{k} \int_0^1 \mathrm{d}w \,f(w) \,\e^{n \phi_2(w)} \nonumber \\
&\sim \frac{1}{k} \frac{f(\wmax)}{\sqrt{-\phi''_2(\wmax)}} \,\e^{n \phi_2(\wmax)} \, .
\label{eq:laplace}
\end{align}
In particular, suppose $\wmax = 1-1/k$.  We have
\[
\phi_2(1-1/k) = 2 \phi_1 \, ,
\]
which corresponds to the fact that $1-1/k$ is the typical value of $w$ if the two sets $T, T'$ are chosen independently.  Thus if $\phi_2$ is maximized at $1-1/k$, and if $\phi''_2 < 0$ there, we have $\Exp{Z^2} \sim C \Exp{Z}^2$ for some constant $C$.

The following lemma shows that this is in fact the case whenever $d < \dstar$.
\begin{lemma}  
\label{lem:wmax}
Let $k > 2$ and $d < \dstar$.  Then $\wmax = 1-1/k$ is the unique maximum of $\phi_2(w)$ in the unit interval, and $\phi''_2(\wmax) < 0$.
\end{lemma}

\begin{proof}
By direct calculation we have $\phi'_2(1-1/k) = 0$ and 
\[
\phi''_2(1-1/k) = - \frac{k(k-d)}{(k-1)^2} \, ,
\]
which is negative since $\dstar < k$ for all $k > 2$.  Thus $1-1/k$ is a local maximum.  To show that it is unique, note that $\phi_2$ has a unique inflection point $w_0$ where $\phi''_2 = 0$, namely 
\[
w_0 = \frac{(d-2)(k-1)}{dk-d-k} \, . 
\]
This implies that $1-1/k$ is the only local maximum.  Thus we just have to eliminate the possibility that the maximum of $\phi_2$ in the unit interval is at $w=0$ or $w=1$.  But this is easy: since $d < \dstar$ we have $\phi_1 > 0$, so $\phi_2(0) = \phi_1 < 2 \phi_1 = \phi_2(1-1/k)$, and as $w \to 1$.  At the other end of the interval, as $w \to 1$ we have $\phi_2(w) \to -\infty$ due to the $h(w)$ term in~\eqref{eq:phi2}.
\end{proof}

Plugging Lemma~\ref{lem:wmax} into the Laplace method~\eqref{eq:laplace} gives
\[
\Exp{Z^2} \sim d \sqrt{\frac{k-1}{k-d}} \,\e^{2n\phi_1} \, ,
\]
and combining this with~\eqref{eq:first} gives
\begin{equation}
\label{eq:ratio}
\frac{\Exp{Z^2}}{\Exp{Z}^2} 
\sim \sqrt{\frac{k-1}{k-d}} = C \, . 
\end{equation}
It follows that $\Pr[Z > 0] \ge 1/C$, completing the proof.
\end{proof}

\section{Small subgraph conditioning}

When there are strong correlations between the events that a pair of assignments are both satisfying, the variance $\Exp{Z^2}-\Exp{Z}^2$ is a constant times $\Exp{Z}^2$, and the second moment method can only prove satisfiability with positive probability.  However, in some cases we can show that the variance is much smaller if we condition on the number of small subgraphs in the formula---in particular, the number of cycles of each constant length.  This technique was introduced in~\cite{robinson-wormald}, where it was used to show that random 3-regular graphs possess a Hamiltonian cycle with high probability; another application~\cite{diaz-5col} showed that random 5-regular graphs are 3-colorable with high probability.

Let $X_i$ be the number of cycles of length $2i$ in the formula, i.e., cycles alternating between $i$ distinct variables and $i$ distinct clauses.  Our goal is to compute the correlation between $Z$ and $X_i$ and its higher moments, and hence to learn to what extent $X_i$ affects the number of satisfying assignments.  Our goal is to explain almost all of the variance in $Z$ with the variance in the $X_i$.

Let $(x)_r$ denote the falling factorial $x(x-1)(x-2) \cdots (x-r+1)$; thus $(X_i)_r$ is the number of ordered lists of $r$ cycles of length $2i$.  
If $X$ is Poisson with mean $\lambda$, we have $\Exp{(X)_r} = \lambda^r$.  We use the following ``plug and play'' version of the subgraph conditioning method from~\cite{wormald}.

\begin{theorem}
Let $Z$ and $X_1, X_2, \ldots$ be nonnegative integer-valued random variables.  Suppose that $\Exp{Z} > 0$, and that for each $i \ge 0$ there are constants $\lambda_i > 0$, $\delta_i > -1$ such that
\begin{enumerate}

\item For any $j$, the variables $X_1, \ldots, X_j$ are asymptotically independent and Poisson distributed, with $\Exp{X_i} \sim \lambda_i$,
\item For any sequence $m_1, \ldots, m_j$ of nonnegative integers, 
\begin{equation}
\label{eq:subgraph-moments}
\frac{\Exp{Z \prod_{i=1}^j (X_i)_{m_i}}}{\Exp{Z}} 
\sim \prod_{i=1}^j \mu_i^{m_i} 
\quad \text{where} \quad 
\mu_i = \lambda_i (1+\delta_i) \, ,
\end{equation}
\item $\sum_{i=1}^\infty \lambda_i \delta_i^2$ is finite, and
\begin{equation}
\label{eq:subgraph-second}
\frac{\Exp{Z^2}}{\Exp{Z}^2} 
\sim \exp\!\left( \sum_{i=1}^\infty \lambda_i \delta_i^2 \right) \, . 
\end{equation}
\end{enumerate}
Then $\Pr[Z > 0] = 1-o(1)$.
\end{theorem}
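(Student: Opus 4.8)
The plan is to run the $L^2$ second‑moment argument in the form that identifies the distributional limit of $Z/\Exp{Z}$ and shows it places no mass at $0$; this is the route of Robinson--Wormald~\cite{robinson-wormald} as streamlined in~\cite{wormald}. Write $Y_n=Z/\Exp{Z}$, so $\Exp{Y_n}=1$ and, by hypothesis~(3), $\Exp{Y_n^2}\to\e^{L}$ with $L:=\sum_{i\ge1}\lambda_i\delta_i^2<\infty$. On an auxiliary space I would introduce independent Poisson variables $\hat X_i$ of mean $\lambda_i$ and form $W_j=\prod_{i=1}^j(1+\delta_i)^{\hat X_i}\e^{-\lambda_i\delta_i}$ and $W=\lim_{j\to\infty}W_j$. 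Using $\Exp{a^{\hat X_i}}=\e^{\lambda_i(a-1)}$ one checks that $(W_j)_j$ is a nonnegative martingale with $\Exp{W_j}=1$ and $\Exp{W_j^2}=\e^{\sum_{i\le j}\lambda_i\delta_i^2}\le\e^{L}<\infty$, so it converges almost surely and in $L^2$ to $W$ with $\Exp{W}=1$; and since every factor is strictly positive (here $\delta_i>-1$ is used), the event $\{W=0\}$ is a tail event of $(\hat X_i)_i$, hence by Kolmogorov's zero--one law has probability $0$ or $1$, and it must be $0$ because $\Exp{W}=1\neq0$. Thus $W>0$ almost surely; this step is where the hypothesis $\sum_i\lambda_i\delta_i^2<\infty$ is consumed.

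Next I would prove the $L^2$ approximation $Y_n\approx U_j$ for $j$ large, where $U_j=\prod_{i=1}^j(1+\delta_i)^{X_i}\e^{-\lambda_i\delta_i}$ is the same product evaluated at the actual counts $X_i$. Expanding $(1+\delta_i)^{X_i}=\sum_{m\ge0}(X_i)_m\,\delta_i^m/m!$ and multiplying out, $\Exp{U_j}$, $\Exp{U_j^2}$ and $\Exp{Y_nU_j}$ become convergent series in the factorial moments $\Exp{\prod_{i=1}^j(X_i)_{m_i}}$ and $\Exp{Y_n\prod_{i=1}^j(X_i)_{m_i}}$. Feeding in hypothesis~(1) in the form $\Exp{\prod_{i=1}^j(X_i)_{m_i}}\to\prod_{i=1}^j\lambda_i^{m_i}$ and hypothesis~(2) term by term, and summing the resulting series, gives
\[
\Exp{U_j}\to 1,\qquad \Exp{U_j^2}\to\e^{L_j},\qquad \Exp{Y_nU_j}\to\e^{L_j},\qquad L_j:=\textstyle\sum_{i=1}^j\lambda_i\delta_i^2 ,
\]
so that $\Exp{(Y_n-U_j)^2}=\Exp{Y_n^2}-2\Exp{Y_nU_j}+\Exp{U_j^2}\to\e^{L}-\e^{L_j}$, which tends to $0$ as $j\to\infty$ since $L_j\uparrow L<\infty$. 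I expect this to be the main obstacle: justifying the interchange of $\lim_n$ with the infinite sums requires uniform‑in‑$n$ control of the factorial moments $\Exp{\prod_i(X_i)_{m_i}}$ (equivalently, uniform integrability of $U_j$ and $U_j^2$) strong enough to dominate $\sum_{\vec m}\prod_{i=1}^j|\delta_i|^{m_i}/m_i!=\prod_{i=1}^j\e^{|\delta_i|}$. This quantitative content is what ``asymptotically independent and Poisson'' is meant to carry, and in the application it is supplied by the standard moment estimates for short‑cycle counts in the configuration model.

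With these ingredients in hand the conclusion is short. First I would deduce $Y_n\Rightarrow W$ in distribution: for any bounded $1$‑Lipschitz $g$, route through $U_j$, using $\Exp{|g(Y_n)-g(U_j)|}\le\Exp{|Y_n-U_j|}\le\sqrt{\Exp{(Y_n-U_j)^2}}$, the fact that $U_j\Rightarrow W_j$ (continuous mapping plus hypothesis~(1)), and $\Exp{g(W_j)}\to\Exp{g(W)}$ (since $W_j\to W$ a.s.); taking $\limsup_n$ and then $j\to\infty$ shows $\Exp{g(Y_n)}\to\Exp{g(W)}$. Finally, since $Y_n\ge0$ and $\Pr[W\le t]\to\Pr[W=0]=0$ as $t\downarrow0$, choosing a small continuity point $t>0$ of the law of $W$ gives $\limsup_n\Pr[Y_n=0]\le\lim_n\Pr[Y_n\le t]=\Pr[W\le t]$, which is arbitrarily small; hence $\Pr[Z>0]=\Pr[Y_n>0]\to1$. (The plain second‑moment method amounts to ignoring the $X_i$ and gives only $\Pr[Z>0]\ge1/\Exp{Y_n^2}\to\e^{-L}$; conditioning on more and more short cycles---letting $j\to\infty$---is what pushes the bound up to $1$.)
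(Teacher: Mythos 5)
The paper does not prove this theorem; it simply cites Wormald's survey \cite{wormald}, where it appears (Theorem~4.1 there) as a packaged form of the Robinson--Wormald small subgraph conditioning method. Your sketch is a faithful reconstruction of the standard proof from that literature: pass to $Y_n=Z/\Exp{Z}$, build the limiting random variable $W=\lim_j\prod_{i\le j}(1+\delta_i)^{\hat X_i}\e^{-\lambda_i\delta_i}$ as an $L^2$-bounded nonnegative martingale with mean $1$, use the Kolmogorov $0$--$1$ law and $\delta_i>-1$ to get $W>0$ a.s., show $\Exp{(Y_n-U_j)^2}\to\e^{L}-\e^{L_j}$ via the factorial-moment hypotheses, and conclude $Y_n\Rightarrow W$ and hence $\Pr[Z=0]\to 0$. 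All the individual computations you display ($\Exp{W_j}=1$, $\Exp{W_j^2}=\e^{L_j}$, $\Exp{Y_nU_j}\to\e^{L_j}$, $\Exp{U_j^2}\to\e^{L_j}$) are correct.

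The one place that needs more than a wave is exactly the one you flag: interchanging $\lim_n$ with the multi-index series in $\Exp{U_j}$, $\Exp{U_j^2}$, $\Exp{Y_nU_j}$. Hypotheses~(1)--(2) give termwise convergence but no uniform-in-$n$ tail control, and saying this is ``what asymptotically Poisson is meant to carry'' is a gloss rather than an argument. Two standard fixes: (i) split the indices into $\{i:\delta_i>0\}$, where every term $\delta_i^{m_i}(X_i)_{m_i}/m_i!$ is nonnegative and monotone convergence applies, and $\{i:\delta_i<0\}$, where $0<(1+\delta_i)^{X_i}\le 1$ is uniformly bounded so dominated convergence applies --- this decouples the sign problem from the growth problem; or (ii) bypass the power-series expansion entirely and decompose $\Exp{Y_n^2}$ by conditioning on the event $\{X_1=a_1,\dots,X_j=a_j\}$, which is the route Janson's textbook version takes and avoids infinite series altogether, since those indicators are bounded. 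Either makes your proof plan rigorous; as written it is a correct outline of the right argument with that one step left soft.
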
 

Applying this technology to prove the following theorem, and thus complete the proof of Theorem~\ref{thm:main}, is an enjoyable exercise in combinatorics.

\begin{theorem}
\label{thm:high}
If $k > 2$ and $d < \dstar$ then $F_{n,d,k}$ is satisfiable with high probability. 
\end{theorem}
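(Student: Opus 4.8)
The plan is to verify the three hypotheses of the preceding theorem, with $Z$ the number of satisfying assignments and $X_i$ the number of cycles of length $2i$ in the bipartite multigraph of the formula; we may assume $d \ge 2$, since for $d = 1$ the formula is a disjoint union of clauses and is trivially satisfiable. Hypothesis~(1) is the standard fact that in the configuration model the counts $X_1,\dots,X_j$ of short cycles are asymptotically independent and Poisson, and a direct calculation gives $\lambda_i = \bigl((d-1)(k-1)\bigr)^i/(2i)$: on a $2i$-cycle each of the $i$ variables and $i$ clauses contributes respectively $d(d-1)$ and $k(k-1)$ ordered choices of its two incident copies, each of the $2i$ edges is present with probability $\sim 1/(km)$, and we divide by $2i$ for the rotations and reflections of the cycle.

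Hypothesis~(2) is the heart of the argument. Fix a satisfying assignment $T$ with $|T| = n/k$; conditioned on $T$ being satisfying, the configuration model becomes a uniform matching of the $m$ true variable-copies to one designated ``true'' copy of each clause, independent of a uniform matching of the $(k-1)m$ false variable-copies to the remaining clause-copies. The structural observation is that on a $2i$-cycle no two consecutive variables can lie in $T$: if $v_j \in T$ and $v_j$ belongs to the adjacent cycle-clause $c_j$, then $v_j$ must be the unique true variable of $c_j$, so the edge $v_j c_j$ occupies the single true copy of $c_j$, and two consecutive true variables would demand two distinct cycle-edges at that one copy. Hence the true cycle-variables form an independent set in the variable cycle $C_i$, and once the pattern of true and false variables is fixed, the status of each of the two cycle-copies of every clause is determined. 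Collecting the weights --- $d(d-1)$ copy choices and population $n/k$ or $(k-1)n/k$ per variable; and, for a clause between variables of states $s,s'$, the factor $(k-1)(k-2)$ if both are false and $k-1$ if exactly one is true, together with the $1/(km)$-type edge probabilities --- expresses $\Exp{Z X_i}/\Exp{Z}$ asymptotically as $\tfrac{1}{2i}\tr(M^i)$ for the $2\times 2$ transfer matrix
\[
M = \begin{pmatrix} (d-1)(k-2) & d-1 \\ (d-1)(k-1) & 0 \end{pmatrix}
\]
on the variable states (false, true). The eigenvalues of $M$ are $(d-1)(k-1)$ and $-(d-1)$, so
\[
\mu_i := \lim_{n\to\infty}\frac{\Exp{Z X_i}}{\Exp{Z}} = \frac{1}{2i}\Bigl(\bigl((d-1)(k-1)\bigr)^i + \bigl(-(d-1)\bigr)^i\Bigr) = \lambda_i\Bigl(1 + \Bigl(\frac{-1}{k-1}\Bigr)^{\!i}\Bigr),
\]
giving $\delta_i = (-1/(k-1))^i$, which lies in $(-1,\infty)$ since $k > 2$. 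For a general tuple $m_1,\dots,m_j$ one uses that a bounded collection of short cycles is vertex-disjoint with high probability, so the cycles contribute independently and $\Exp{Z\prod_{i=1}^j (X_i)_{m_i}}/\Exp{Z} \sim \prod_{i=1}^j \mu_i^{m_i}$.

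For hypothesis~(3),
\[
\sum_{i=1}^\infty \lambda_i \delta_i^2 = \sum_{i=1}^\infty \frac{1}{2i}\Bigl(\frac{d-1}{k-1}\Bigr)^{\!i} = -\frac12\ln\!\Bigl(1 - \frac{d-1}{k-1}\Bigr) = \frac12\ln\frac{k-1}{k-d},
\]
which is finite because $d < \dstar < k$; exponentiating gives $\sqrt{(k-1)/(k-d)}$, exactly the constant $C$ of~\eqref{eq:ratio}, so $\Exp{Z^2}/\Exp{Z}^2 \sim \exp\bigl(\sum_i \lambda_i\delta_i^2\bigr)$. All three hypotheses then hold and the theorem yields $\Pr[Z>0] = 1-o(1)$, which together with Lemma~\ref{lem:upper} completes the proof of Theorem~\ref{thm:main}. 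I expect the main obstacle to be hypothesis~(2) --- assembling the transfer matrix correctly, keeping straight the copy-choice counts, the populations of true versus false variables, and the probabilities from the two conditioned matchings and from the designation of one copy of each clause as true, and then reducing $\Exp{Z\prod_i(X_i)_{m_i}}$ to a product of single-cycle contributions via vertex-disjointness. The agreement in hypothesis~(3) between $\exp(\sum_i\lambda_i\delta_i^2)$ and $C$ is automatic once the eigenvalues of $M$ are known, but it is a reassuring check.
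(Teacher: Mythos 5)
Your proposal is correct and takes essentially the same route as the paper: verify the three hypotheses of the small subgraph conditioning theorem, with the crucial observation that adjacent true variables on a cycle are forbidden, and a transfer-matrix/generating-function computation of $\mu_i$. The only cosmetic difference is that the paper first isolates the pure combinatorial sum $g(z)=\sum_t N_{i,t}z^t=\tr\begin{pmatrix}0&\sqrt z\\\sqrt z&1\end{pmatrix}^{i}$ and then substitutes $z=(k-1)/(k-2)^2$ with a $((k-2)(d-1))^i$ prefactor, whereas you bake all the copy-count weights directly into your matrix $M$ and diagonalize it (the two matrices are similar, with the same trace); likewise, where you invoke vertex-disjointness of short cycles to factorize $\Exp{Z\prod_i (X_i)_{m_i}}/\Exp{Z}$, the paper carries out the multi-cycle bookkeeping explicitly, but both are standard and valid.
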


\begin{proof}
Standard arguments for sparse random graphs~\cite{bollobas} show that the $X_i$ are asymptotically independent and Poisson distributed.  To compute the asymptotic expectation $\lambda_i$, note that there are $(m)_i (n)i$ sequences of clauses and variables that $C$ could visit; since there are $i$ variables where we could start a cycle and two directions in which we could go, this overcounts by a factor of $2i$.  There are $(k(k-1)d(d-1))^i$ choices of copies with which to wire each variable to the clause before and after it in the sequence, and the number of matchings that include a given such wiring is $(km-2i)!$.  Thus
\begin{equation}
\label{eq:lambda}
\Exp{X_i} 
= \frac{1}{2i} (m)_i (n)_i \big( k(k-1)d(d-1) \big)^i \,\frac{(km-2i)!}{(km)!} 
\sim \frac{\big( (k-1) (d-1) \big)^i}{2i} 
= \lambda_i \, .
\end{equation}

In order to establish~\eqref{eq:subgraph-moments}, we first warm up by computing $\Exp{Z X_i}$.  This is the sum over all pairs $(T, C)$, where $T$ is an assignment and $C$ is a cycle of length $2i$, of the fraction of matchings containing $C$ for which $T$ is satisfying.  

We start by choosing one of the ${n \choose n/k}$ possible satisfying assignmments $T$.  We then choose $C$.  First, we choose $t=|C \cap T|$, the number of true variables in $C$.  Let us think of $C$ as a cycle of $i$ variables, where the edges between them correspond to their shared clauses.  Since each clause must contain exactly one true variable, none of $C$'s true variables can be adjacent; in particular, $t \le \lfloor i/2 \rfloor$.  (This is similar to~\cite{robinson-wormald}, where no two adjacent edges of $C$ can belong to a Hamiltonian cycle.)  Let $N_{i,t}$ be the number of ordered, labeled cycles with $t$ true variables, where no two true variables are adjacent; for instance, $N_{6,0}=1$, $N_{6,1} = 6$, $N_{6,2} = 9$, and $N_{6,3} = 2$.

Now that we have chosen $t$, and chosen one of the $N_{i,t}$ arrangements of true variables in it, we choose what variables and clauses $C$ contains and how they are matched to each other.  There are $(m)_i$ ordered sets of $i$ clauses, and $(n/k)_t \,\big( (1-1/k)n \big)_{i-t}$ choices of which true and false variables appear in $C$ and in what order.  As before, there are $( k(k-1) d(d-1) )^i$ ways to wire each variable to the clause before and after it, and all this overcounts by a factor of $2i$.

At this point in the process, we have already satisfied $2t$ clauses in $C$, so there are $m-2t$ clauses waiting to be satisfied.  Happily, we have $dn/k-2t = m-2t$ unmatched copies of true variables with which to satisfy them.  The $m-i$ clauses outside $C$ have $k$ unmatched copies each, and the $i-2t$ clauses in $C$ that are not yet satisfied each have $k-2$ unmatched copies.  Thus there are $(m-2t)!$ orders in which we can assign copies of true variables to clauses, and $k^{m-i} \,(k-2)^{i-2t}$ ways to match them with these clauses' copies.  After all this, there are $(k-1)m-2(i-t)$ unmatched copies of false variables, which can be matched with the remaining clause copies arbitrarily.  Finally, we divide by $(km)!$ to obtain
\begin{align*}
\Exp{Z X_i}
= {n \choose n/k} 
\sum_{t=0}^{\lfloor i/2 \rfloor} &\left[ \frac{N_{i,t}}{2i} \,(m)_i \,(n/k)_t \,\big( (1-1/k)n \big)_{i-t} 
\big( k(k-1) d(d-1) \big)^i \right. \\
& \times \left. \frac{(m-2t)! \,k^{m-i} (k-2)^{i-2t} \big( (k-1)m - 2(i-t)\big)!}{(km)!} \right] \, .
\end{align*}
Dividing by $\Exp{Z}$ and using $(m)_i \sim m^i$, $m! / (m-2t)! \sim m^{2t}$ and so on gives
\begin{align*}
\frac{\Exp{Z X_i}}{\Exp{Z}}
&= \sum_{t=0}^{\lfloor i/2 \rfloor} \left[ \frac{N_{i,t}}{2i} 
\,(m)_i \,(n/k)_t \,\big( (1-1/k)n \big)_{i-t} 
\big( k(k-1) d(d-1) \big)^i \right. \\
& \qquad \quad \times \left. \frac{(m-2t)! \,k^{m-i} (k-2)^{i-2t} \big( (k-1)m - 2(i-t)\big)!}{m! \,k^m \,((k-1)m)!} \right] \\
&\sim 
\frac{\big( (k-2)(d-1) \big)^i}{2i}
\sum_{t=0}^{\lfloor i/2 \rfloor} N_{i,t} \left( \frac{k-1}{(k-2)^2} \right)^{\!t} \\
&= \mu_i = \lambda_i (1+\delta_i) \, , 
\end{align*}
where
\[
\delta_i = \left( \frac{k-2}{k-1} \right)^{\!i}
\,\sum_{t=0}^{\lfloor i/2 \rfloor} N_{i,t} \left( \frac{k-1}{(k-2)^2} \right)^{\!t} 
\;-\; 1
\, . 
\]
We can evaluate this sum with the generating function
\begin{align*}
g(z) = \sum_{t=0}^{\lfloor i/2 \rfloor} N_{i,t} z^t 
&= \tr \begin{pmatrix} 0 & \sqrt{z} \\ \sqrt{z} & 1 \end{pmatrix}^{\!i} \\
&= \left( \frac{1+\sqrt{1+4z}}{2} \right)^{\!i}
+ \left( \frac{1-\sqrt{1+4z}}{2} \right)^{\!i} \, ,
\end{align*}
giving
\begin{equation}
\label{eq:delta}
\delta_i 
= \left( \frac{k-2}{k-1} \right)^{\!i} g\!\left( \frac{k-1}{(k-2)^2} \right) - 1
= \left( -\frac{1}{k-1} \right)^i \, .
\end{equation}

Generalizing this calculation to show that~\eqref{eq:subgraph-moments} holds is a matter of bookkeeping.  Let $\ell = \sum_{s=1}^j m_i$, and let $i_1,\ldots,i_\ell$ be a sorted list where each $s$ appears $m_s$ times.  Then $\Exp{Z \prod_{i=1}^j (X_i)_{m_i}}$ is the expected number of tuples $(T, C_1, \ldots, C_\ell)$ where $T$ is a satisfying assignment and each $C_s$ is a cycle of length $2 i_s$.   Counting as before gives
\begin{align*}
\frac{\Exp{Z \prod_{i=1}^j (X_i)_{m_i}}}{\Exp{Z}} 
&= \sum_{t_1=0}^{\lfloor i_1/2 \rfloor} \sum_{t_2=0}^{\lfloor i_2/2 \rfloor} \cdots \sum_{t_\ell=0}^{\lfloor i_\ell/2 \rfloor} 
\left[ \prod_{s=1}^\ell \frac{N_{i_s,t_s}}{2i_s} \right. \\
& \quad \times \left. \,(m)_{\sum_s \!i_s} \,(n/k)_{\sum_s \!t_s} \,\big( (1-1/k)n \big)_{\sum_s (i_s-t_s)}
\big( k(k-1) d(d-1) \big)^{\sum_s \!i_s} \right] \nonumber \\
& \quad \times \left. \frac{(m-2\sum_s t_s)! \,k^{m- \sum_s \!i_s} (k-2)^{\sum_s (i_s-2t_s)} \big( (k-1)m - 2 \sum_s (i_s-t_s)\big)!}{m! \,k^m \,((k-1)m)!} \right] \\
&\sim 
\prod_{s=1}^\ell 
\frac{\big( (k-2)(d-1) \big)^{i_s}}{2i_s}
\sum_{t_s=0}^{\lfloor i_s/2 \rfloor} N_{i_s,t_s} \left( \frac{k-1}{(k-2)^2} \right)^{\!t_s} \\
&= \prod_{s=1}^\ell \mu_{i_s} 
= \prod_{i=1}^j \mu_i^{m_i} \, . 
\end{align*}

Finally, we establish~\eqref{eq:subgraph-second}.  Using the Taylor series $-\log (1-z) = \sum_{i=1}^\infty z^i / i$ gives
\[
\sum_{i=1}^\infty \lambda_i \delta_i^2 
= \frac{1}{2} \sum_{i=1}^\infty \frac{1}{i} \left( \frac{d-1}{k-1} \right)^{\!i}
= \frac{1}{2} \log \frac{k-1}{k-d} \, , 
\]
and comparing with~\eqref{eq:ratio} shows that this is indeed the logarithm of the asymptotic ratio $C \sim \Exp{Z^2} / \Exp{Z}^2$.  This completes the proof.
\end{proof}

\section*{Acknowledgments}

This work was supported by NSF grants CCF-1117426 and CCF-1219117.  I am grateful to Allan Sly, Lenka Zdeborov\'a, and Amin Coja-Oghlan for helpful conversations.  I am also grateful to the Bellairs Research Institute of McGill University where part of this work was carried out.

\end{document}